\newif\ifreview

\documentclass[sigconf,nonacm]{acmart}
\setcopyright{none}

\usepackage[T1]{fontenc}
\usepackage[fleqn]{mathtools}
\usepackage{tabularx,acmart-taps}
\usepackage{float}
\usepackage{wrapfig}
\usepackage{xcolor}
\usepackage{listings}
\usepackage{fancyvrb}

\ifreview

\else

\fi

\AtBeginDocument{%
  }

\copyrightyear{2023}
\acmYear{2023}
\setcopyright{rightsretained}
\acmConference[UIST '26]{ACM Symposium on User Interface Software and Technology}{November 2--5, 2026}{Detroit, MI, USA}
\acmBooktitle{ACM Symposium on User Interface Software and Technology (UIST '26), November 2--5, 2026 Detroit, MI, USA}
\acmDOI{10.1145/3586183.3606731}
\acmISBN{979-8-4007-0132-0/23/10}

\begin{document}

\title[Explorable Theorems]{Explorable Theorems: Making Written Theorems Explorable by Grounding Them in Formal Representations}


\definecolor{andrewpurple}{HTML}{A53DFF}
\definecolor{andreworange}{HTML}{E07400}
\definecolor{darkgreen}{HTML}{009B55}
\definecolor{darkblue}{HTML}{004d80}
\definecolor{magenta}{HTML}{99195d}

\newcommand\andrew[1]{\textcolor{andrewpurple}{#1}}
\newcommand\important[1]{\textcolor{darkgreen}{#1}}
\newcommand\unimportant[1]{\textcolor{gray}{\sout{#1}}}
\newcommand\move[1]{\textcolor{andreworange}{#1}}
\newcommand{\change}[1]{\textcolor{andrewpurple}{#1}}
\newenvironment{changes}
{\begingroup\color{andrewpurple}}
{\endgroup}

\def\computer#1{{\small\texttt{#1}}}
\AtBeginEnvironment{quote}{\itshape}

\def\subparagraph#1{\textbf{#1.}}

\def\UrlBigBreaks{\do\/\do-\do\#}

\def\shortspace{\kern 0.1em}

\def\KaTeX{K\kern-.2em\raisebox{.2em}{\scriptsize A}\kern-.12em\TeX}

\definecolor{niceblue}{HTML}{8295ff}
\def\bigbox{\color{niceblue}\rule[.25ex]{1ex}{1ex} \hskip .1ex}
\def\smallbox{\hskip .25ex \color{gray}\rule[.5ex]{.5ex}{.5ex} \hskip .25ex \hskip .1ex}
\def\boxes#1#2{
\hskip .1ex 
\newcount\boxnum
\boxnum=0
\loop
\ifnum \boxnum<#1 \bigbox \else \smallbox \fi

\advance \boxnum by 1
\ifnum \boxnum<#2
\repeat
}

\newenvironment{inlinefigureenv}
{\setlength{\topsep}{2.5ex}\center}
{\endcenter}

\newcommand{\inlinefigure}[2][.5\textwidth]{%
\begin{inlinefigureenv}%
\includegraphics[width=#1]{#2}%
\vspace{-1.25ex}%
\end{inlinefigureenv}%
}

\newcommand{\jeff}[1]{\textcolor{blue}{\textbf{[JEFF: #1]}}}
\newcommand{\jessica}[1]{\textcolor{red}{\textbf{[JESSICA: #1]}}}
\newcommand{\xiaorui}[1]{\textcolor{green}{\textbf{[XIAORUI: #1]}}}
\newcommand{\hita}[1]{\textcolor{orange}{\textbf{[HITA: #1]}}}
\newcommand{\hg}[1]{\textcolor{purple}{\textbf{[HARRY: #1]}}}
\newcommand{\circled}[1]{\textcircled{\small #1}}

\begin{abstract}

LLM-generated explanations can make technical content more accessible, but there is a ceiling on what they can support interactively. Because LLM outputs are static text, they cannot be executed or stepped through. We argue that grounding explanations in a formalized representation enables interactive affordances beyond what static text supports. We instantiate this idea for mathematical proof comprehension with \textit{explorable theorems}, a system that uses LLMs to translate a theorem and its written proof into Lean, a programming language for machine-checked proofs, and links the written proof with the Lean code. Readers can work through the proof at a step-level granularity, test custom examples or counterexamples, and trace the logical dependencies bridging each step. Each worked-out step is produced by executing the Lean proof on that example and extracting its intermediate state. A user study ($n = 16$) shows potential advantages of this approach: in a proof-reading task, participants who had access to the provided explorability features gave better, more correct, and more detailed answers to comprehension questions, demonstrating a stronger overall understanding of the underlying mathematics.


\end{abstract}

\author{Hita Kambhamettu}
\orcid{0000-0001-9620-1533}
\email{hitakam@seas.upenn.edu}
\affiliation{%
  \institution{University of Pennsylvania}
  \city{Philadelphia, PA}
  \country{USA}
}

\author{Will Crichton}
\orcid{0000-0001-8639-6541}
\email{will_crichton@brown.edu}
\affiliation{%
  \institution{Brown University}
  \city{Providence, RI}
  \country{USA}
}

\author{Sean Welleck}
\orcid{0009-0003-1744-0096}
\email{swelleck@andrew.cmu.edu}
\affiliation{%
  \institution{Carnegie Mellon University}
  \city{Pittsburgh, PA}
  \country{USA}
}

\author{Harrison Goldstein}
\orcid{0000-0001-9631-1169}
\email{hgoldste@buffalo.edu}
\affiliation{%
  \institution{University at Buffalo}
  \city{Buffalo, SUNY}
  \country{USA}
}

\author{Andrew Head}
\orcid{0000-0002-1523-3347}
\email{head@seas.upenn.edu}
\affiliation{%
  \institution{University of Pennsylvania}
  \city{Philadelphia, PA}
  \country{USA}
}

\renewcommand{\shortauthors}{Kambhamettu et al.}

\begin{CCSXML}
<ccs2012>
   <concept>
       <concept_id>10003120.10003121.10003129</concept_id>
       <concept_desc>Human-centered computing~Interactive systems and tools</concept_desc>
       <concept_significance>500</concept_significance>
       </concept>
 </ccs2012>
\end{CCSXML}

\ccsdesc[500]{Human-centered computing~Interactive systems and \nolinebreak tools}

\keywords{explorable explanations, AI-augmented proof reading, formal representations}
\begin{teaserfigure}
  \centering
  \includegraphics[width=0.7\textwidth]{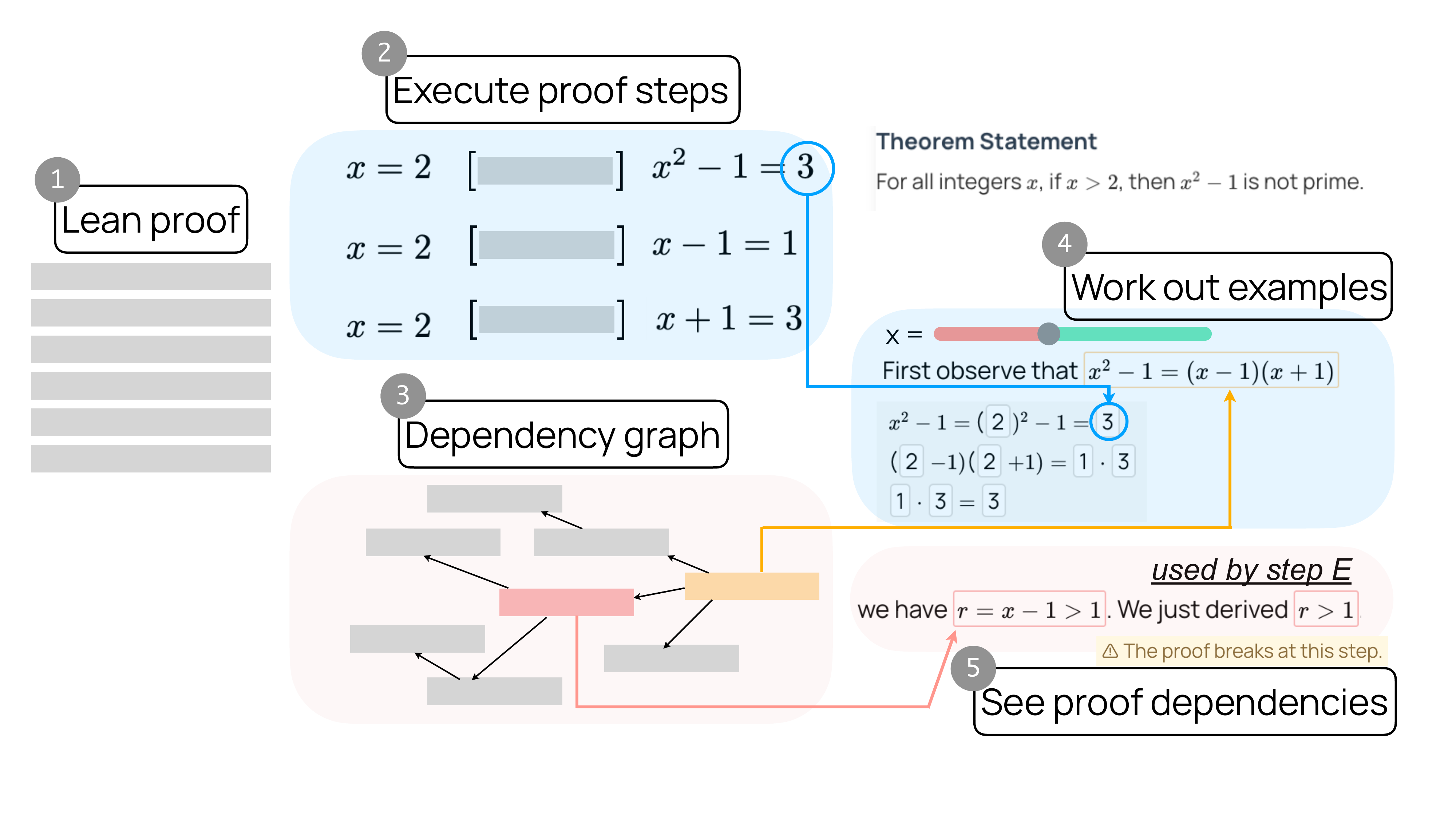}
  \vspace{-1ex}
  \caption{
  Explorable theorems enlivens a static written proof with explorability affordances derived by analyzing and executing a linked formal machine proof. \textmd{Given a written proof, a corresponding machine-checkable Lean proof is generated~(\circled{1}). Structural and semantic information is derived by executing steps of the Lean proof ~(\circled{2}) and extracting logical dependencies~(\circled{3}). This information is plugged into the written proof, allowing readers to use it as an explorable text. They can ask questions about whether a theorem holds on one or another example, the bounds of applicability, how the math ``works out'' for individual examples, and where counterexamples fail~(\circled{4}). They can also ask questions about why propositions matter and where they come from by tracing to their definitions and uses~(\circled{5}).} 
  }
  \vspace{2ex}
  \Description{Add your accessibility descriptions to the caption here.}
  \label{fig:teaser}
\end{teaserfigure}

\received[Revised]{31 March 2026}

\maketitle
\ifreview
\raggedbottom
\else
\raggedbottom
\fi
\section{Introduction}

Artifacts that require formal reasoning, such as mathematical proofs~\cite{selden2003validations, schoenfeld1988good}, programs~\cite{lister2004multi, letovsky1987cognitive}, or logical arguments~\cite{mercier2011humans, johnson2005mental}, can be difficult to understand through reading alone. To illustrate this challenge within the domain of mathematics, consider this theorem:

\begin{quote}
Let $a,d \in \mathbb{N}$ with $\gcd(a,d)=1$. Then there are infinitely many prime numbers $p$ such that
\[
p \equiv a \pmod d.
\]
\end{quote}

This theorem says: pick any two numbers $a$ and $d$ that share no common factors; then infinitely many primes will leave remainder $a$ when divided by $d$. To get a feel for it, a reader might instantiate it at $a=1$, $d=4$, where $\gcd(1,4)=1$ holds, and check which primes have remainder $1$ when divided by $4$, seeing enough primes that they get a sense that they might go on forever:
\[
5, 13, 17, 29, 37, \dots \equiv 1 \pmod 4.
\]

They might then ask: why is this condition necessary? At $a=2$, $d=4$, where $\gcd(2,4)\neq 1$, only $p=2$ satisfies $p \equiv 2 \pmod 4$. All other numbers in this progression ($6$, $10$, $14$...) are multiples of $2$ and thus composite. This clarifies why $\gcd(a,d)=1$ is necessary: if $a$ and $d$ share a common factor, all numbers in the sequence will share that factor, limiting the sequence to at most one prime. Through this exercise, the reader is not just reading but exploring the theorem and its bounds: choosing inputs, running the theorem against them, and using the results to build a working model of what the theorem's bounds are, and why they exist. This kind of active engagement, illustrating abstract assertions through concrete examples, is a well-documented strategy for effective proof reading~\cite{weber2015effective}.

AI tools are being increasingly used for math education~\cite{pepin2025scoping, mohammad2025college}. While these tools have been shown to support general mathematical understanding~\cite{kock2025engineering, mohammad2025college}, generated text outputs do not naturally support this kind of active interaction. This is because freeform text cannot be executed or queried, meaning the reader cannot tinker with AI outputs. While a user might try to simulate this exploration by repeatedly prompting the AI, this just generates disconnected outputs. The reader is never interacting with the mathematical structure, but rather reifying different copies of generated outputs. In this work we ask: how can we combine the convenience of AI explanations with the exploration necessary for understanding a formal domain like math?

We argue that grounding the written math in a formal representation can enable interactive affordances that text alone cannot support. This idea has precedent in explorable explanations~\cite{victor2011explorable}, digital mediums that integrate narrative with computational models to let readers manipulate parameters and observe consequences.

Recent HCI research has sought to support active reading by grounding unstructured text in a structured representation. For example, AbstractExplorer~\cite{gu2025abstractexplorer} organizes freeform text into structured relational schemas to help readers compare information across documents. DataTales~\cite{sultanum2023datatales} connects narrative text to datasets, creating interactive links between prose and visualizations. More recently, VeriPlan~\cite{lee2025veriplan} uses formal model checkers to verify natural language plans. 
These systems demonstrate that using structured representations unlocks richer interactive reading affordances. Yet, the structures in these prior works largely operate post-hoc, serving primarily to organize or verify static outputs.

We extend this idea, both in terms of active interaction and strict formality, to the domain of mathematical proof comprehension, instantiating it as \textit{explorable theorems}. Given a natural-language theorem and proof, our system first formalizes them by generating its corresponding Lean theorem and proof. Lean is a programming language for writing machine-checked proofs~\cite{demoura2015lean}. While Lean proofs sometimes diverge from the structure of written proofs, they can be analogous in areas of math like algebra and in proof structures relying on forward-based reasoning. These cases allow for a natural linking between written and Lean proofs, which we utilized in our running examples and study tasks.


Because Lean proofs are executable, the system can run the proof on any concrete input and extract its intermediate states, the formal record of what is known and what remains to be shown at each step of the proof. These intermediate states also convey which propositions were used in each step of the proof, building up a dependency graph of it. These structures are what ground a reader's explorations: a reader can choose any example, step through the proof as it applies to that example, and see how each proof step’s conclusion follows from the ones before it.

We evaluate explorable theorems in a user study ($n=16$) against a chatbot baseline, an increasingly common way readers currently seek help with proofs~\cite{yoon2024students}. Participants using explorable theorems produced more correct answers to proof-comprehension questions, referenced more explicit proof steps, and demonstrated stronger understanding of the underlying mathematical concepts. An expert judged their open-ended responses to be more correct and more reflective of the underlying mathematics. 

We conclude with discussing how by grounding content to an underlying formal representation, it becomes an inspectable artifact that users can probe directly by reviewing concrete instances, tracing logical dependencies, and actively investigating boundary conditions made visible by the system. Mathematical proofs represent an instantiation of this principle, and we discuss how these design implications extend to other structured domains.

This paper contributes:
\begin{itemize}
    \item The idea of grounding LLM-generated explanations in formal representations to enable interactive affordances that freeform text cannot support.
    \item \textit{Explorable theorems}, a system that instantiates this idea for mathematical proofs, using Lean as the formal backing for reader-steerable, example-driven proof exploration.
    \item Results from a user study ($n=16$) demonstrating that formally grounded exploration leads to better proof comprehension than a chatbot baseline.
\end{itemize}

\section{Background and Related Work}


\subsection{$\forall x \in \{\text{theorem},\text{proof}\},\; \text{hard}(x)$}

Prior work states that proof comprehension is challenging for learners~\cite{stylianou2015undergraduate, harel2007toward, stylianou2009teaching}. Effective proof reading is said to be through  active reconstruction, including working through examples and connecting individual steps to the overall structure of the argument~\cite{weber2015effective}. The ability to move fluidly between these representations of math is a critical difference between a math expert, who can look for the big picture \cite{weber2008how, mejiaramos2012assessment}, seek out reusable methods \cite{weber2011why}, and strategically use examples to test assertions \cite{weber2011why, weber2008how}, and a novice, who might reason through proofs in the abstract, failing to employ concrete examples that would help them develop intuition around the meanings of theorems and assertions \cite{moore1994making}.

This matters especially when AI tools are being increasingly used for math education~\cite{Serhan2024IntegratingCIA, Wardat2023ChatGPTARA}. While AI can support learning, it risks exacerbating the passivity that already hinders proof comprehension: students may accept generated explanations without evaluating arguments or reconstructing why a proof works~\cite{yoon2024students}. As \citet{yoon2024students} argue, when students rely on AI as an authority figure, it hinders their own critical thinking. We share this view. Rather than generating explanations for readers to passively consume, explorable theorems supports active exploration by providing structured ways to inspect examples and trace logical steps. Our user study demonstrates the impact of this engagement: compared to a chatbot baseline, participants using explorable theorems wrote proofs with significantly more explicit mathematical steps, richer detail, and a stronger grasp of the underlying concepts.

\subsection{$x \mapsto \text{example}(x)$}

Prior HCI research has contributed ways to make mathematical documents more accessible through visual augmentation. Early probes envisioned dynamic interfaces with annotations and adjustable tracing~\cite{dragunov2003designing}, while recent work has documented~\cite{head2022math} and operationalized~\cite{wu2023ffl, tao2025freeform} practices for styling and labeling typeset math. While these approaches help readers understand notation and derivations, they do not structurally ground abstract proofs in concrete, executable instances.
A second line of work makes mathematical documents interactive. Systems introduce reactive documents with scrubbable values~\cite{victor2011tangle}, direct manipulation of algebraic notation through gesture-based interaction~\cite{weitnauer2016graspable}, interactive math notation in academic papers~\cite{crichton2021nota}, and executable math papers~\cite{li2022heartdown}. Specific to theorems, \citet{alcock2011eproofs} make hidden logical structures explicit by using visual annotations to link individual statements to their premises. However, while these manually authored ``e-Proofs'' were popular with students, they did not yield measurable improvements in proof comprehension~\cite{alcock2015investigating}, highlighting the difficulty of designing effective augmented proofs. This pattern of visual linkage has been carried forward by dedicated practitioners. For instance, ~\citet{rougeux2018byrne} developed an interactive adaptation of Byrne's Euclid that connects written geometric proof steps directly to color-coded diagrams, making these diagrams sort of visual proofs. Yet, producing such deeply integrated artifacts is exceedingly labor-intensive. 

While existing interactive documents make notation reactive and manipulable, they do not support a reader actively exploring a proof on their own terms. This need for active exploration is articulated by Bret Victor's ``explorable explanations'' \cite{victor2011explorable} and Kill Math projects \cite{victor2011killmath}, which use interactive simulations to shift mathematical understanding away from abstract symbols and toward more tangible, dynamic representations. While Victor advocates for bypassing conventional notation entirely, our approach serves as a bridge, connecting tangible exploration back to the formal proof. This is conceptually similar to ``concreteness fading,'' an instructional technique that transitions students from concrete manipulatives to formal mathematical representations~\cite{kim2020concreteness}. In the domain of mathematics, these methods align with the ``proceptual'' model \cite{gray1994duality}, which asserts that learners build understanding by connecting concrete processes to their abstract expressions.

The value of explorable explanations for communicating complex technical concepts is well-established, evidenced by the success of science simulations like PhET \cite{phet} and interactive scientific journals like Distill \cite{olah2017distill}. More closely related to our approach, Augmented Math and Augmented Physics \cite{augmentedmath, augmentedphysics} make static mathematical notation manipulable. While these systems scaffold the reading of formulas, they operate at the level of syntactic rewriting and do not evaluate the underlying mathematical logic. They lack a global understanding of the mathematical argument's structural integrity, especially the ability to verify the logical flow between multiple steps or check the validity of a long-form proof. Our work extends beyond this presentation-level manipulation. By grounding written proofs in a formal representation, we enable readers to verify the underlying semantic logic, explore concrete instances of abstract steps, and trace dependencies across the entire document.


\subsection{$\text{LLMs} \cap \text{Lean}$}

A formal representation of mathematical logic helps us generate explorable explanations for explorable theorems. The formal representation we use is Lean \cite{moura2015lean}. Lean is an interactive theorem prover and functional programming language. It works by allowing users to write tactics, commands that incrementally transform a mathematical goal into known premises, relying on correspondences where propositions are treated as types and proofs as programs \cite{moura2021lean4}. Lean belongs to a broader family of proof assistants, such as Rocq and Isabelle, which provide computational frameworks for formalizing mathematics and verifying logical correctness \cite{harrison2014history}.

Proof assistants can be useful for mathematics education because they provide immediate feedback and demand a level of logical unambiguity from students. This direction has been explored by systems like Verbose Lean \cite{massot2024teaching} and Waterproof \cite{wemmenhove2022waterproof}. However, proof assistants may not be used as a de facto format for mathematical communication due to their steep learning curve. Their highly specialized syntax, combined with the inherent rigidity and verbosity of full formalization, makes them inaccessible to those who are not domain experts\cite{Minh2025ProofAF}. There are several efforts to make formalized proofs more communicative, such as translating code into natural language \cite{massot2024teaching}, or using Alectryon \cite{pit2020untangling} and Lean widgets \cite{LeanWidgets} to expose the rich proof-state information that is often hidden in the source text. These experiments have demonstrated that formalized proofs contain a wealth of information that can be extracted and channeled back to the user. We build on this by piping that information directly into a written mathematical proof. Additionally, current interfaces are primarily built for authoring and verifying rather than exploring. We enable a more active exploration.

Towards this, we use LLMs to translate theorems and their written proofs into Lean. While recent advances show that LLMs are increasingly capable of generating valid Lean code \cite{jiang2023draft, liu2026numina}, the machine learning community has primarily treated formalization as a benchmark~\cite{zheng2021minif2f} for automated reasoning, optimizing for whether a proof successfully compiles, not whether it can be used to help someone understand written argument. Designing pipelines that reliably follow these human-centric structural properties remains largely underexplored. Our work proposes a generation pipeline that incorporates these properties, presenting the first HCI system that integrates LLMs and formal proof assistants to create explorable explanations of theorems.

\section{System}\label{System}

Provided the right underlying formalism is found, features in explorable theorems could serve as a general design pattern for explanations across other formal domains. We instantiate this approach for the domain of mathematical proofs. This section first describes the design rationale for each interface affordance, then gives the technical approach used to realize each one.\footnote{A more scenario-centric view of the interface can be accessed in the usage scenario in the Appendix~\ref{sec:narrative-scenario}.}

\subsection{Design}

Formalism unlocks a rich set of desirable interactions. We detail what they are in the case of making theorems explorable.

\paragraph{Checking bounds of a theorem} When first reading a theorem, a reader might naturally want to probe its limits: for what values does it hold, and for what values does it not? The \textit{theorem input slider} (see \circled{1} in Figure~\ref{fig:implementation}) lets readers evaluate the theorem on any input, giving them a sense of what the theorem is stating. Under the hood, the system runs the Lean proof on each input and surfaces whether the theorem holds or breaks down at that value.

\paragraph{Following steps of a proof} Once a reader understands what the theorem claims, they may want to read its proof. For each step of the written proof, the system displays a \textit{worked example} (see \circled{2} in Figure~\ref{fig:implementation}) that instantiates the step's reasoning with concrete values, helping the reader see exactly what each step amounts to. To enable this, the system extracts intermediate values from the Lean proof state on example values for the theorem's parameters chosen from the slider, or for any example values provided by the user in a custom value input widget. The system then inserts the intermediate values into generated templates that mirrors the structure of the written step.

\paragraph{Tracing provenance of a proposition} Proofs can be difficult to follow, especially when a step references facts established much earlier. It can also be unclear, on a first read, why a particular step is necessary. The \textit{proof step dependencies} (see \circled{3} in Figure~\ref{fig:implementation}) lets a reader click a proposition in a proof step to see which earlier steps it depends on and which later steps build on it, making the logical structure of the proof navigable. This is enabled by constructing a dependency graph over propositions in the Lean proof and mapping it back to the corresponding propositions in the written proof.

\paragraph{Understanding conditions of a theorem} It can be difficult to understand why a theorem is stated the way it is, particularly why a particular assumption is necessary. By sliding to an input that violates the theorem's conditions, a reader can \textit{inspect the proof logic} (see \circled{4} in Figure~\ref{fig:implementation}) and see exactly which step breaks down, making the role of each assumption legible. Under the hood, the system evaluates the Lean proof state on the given input and flags steps at which the proof no longer holds.


\begin{figure}
    \centering
    \includegraphics[width=\linewidth]{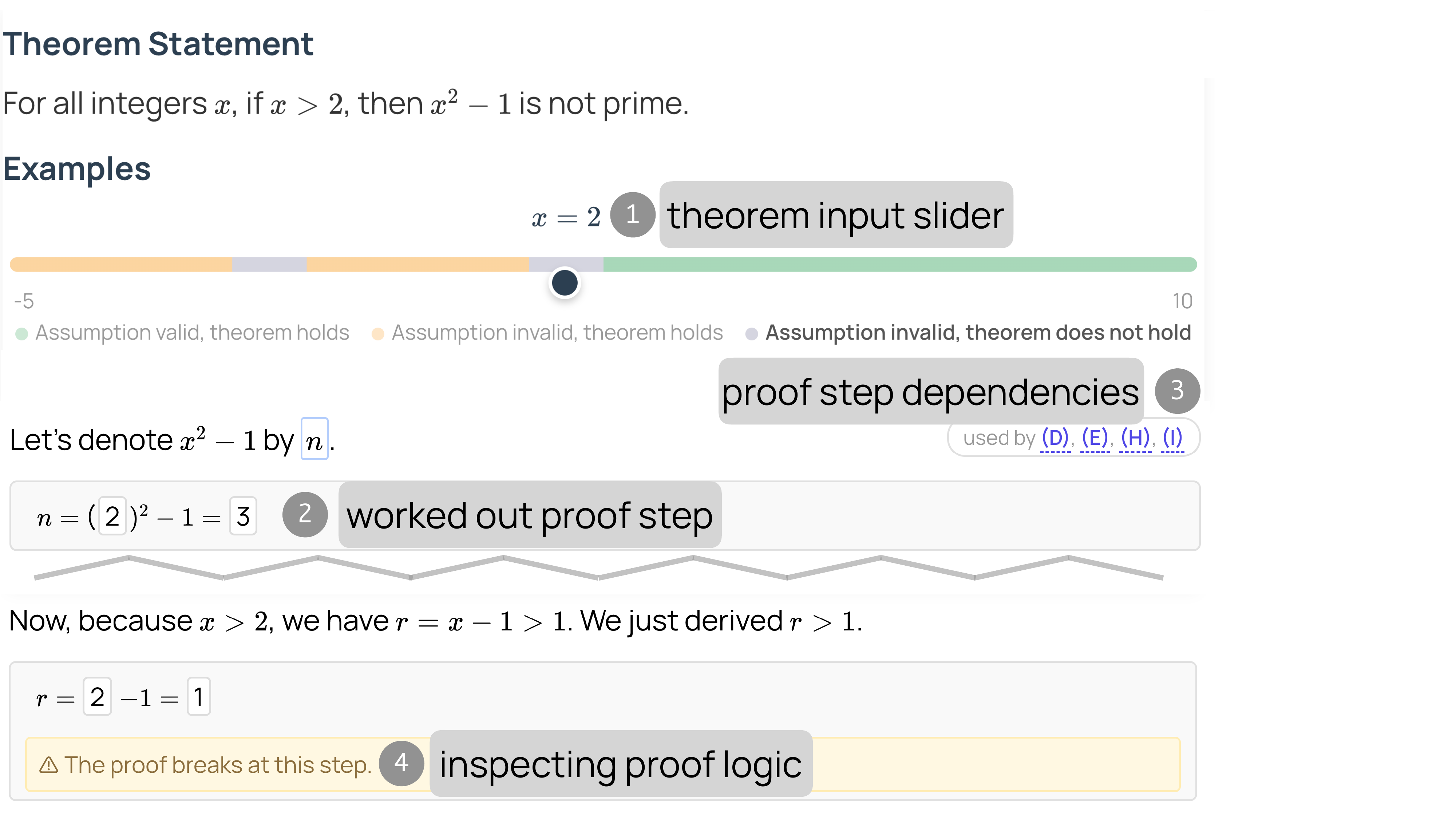}
    \caption{The explorable theorems interface \mdseries for the theorem ``For all integers~$x$, if $x > 2$, then $x^2 - 1$ is not prime''. The input slider~\circled{1} lets readers set a concrete value of~$x$; color coding indicates where the theorem's assumption $x > 2$ holds. Each proof step is instantiated with Lean-verified intermediate values~\circled{2}: for $x = 2$, the expression $n = x^2 - 1$ evaluates to~$3$. Clicking a variable such as~$n$ reveals the downstream proof steps that depend on it~\circled{3}. When the concrete value violates the theorem's assumptions, the interface flags exactly where the proof breaks~\circled{4}, making the role of each assumption legible to the reader. 
    }
    \label{fig:implementation}
\end{figure}

\subsection{Components}

The guiding principle of our implementation is a tight coupling between a prose proof and its formal Lean representation: interactions on the interface side, such as checking bounds of a theorem, following steps of a proof, or tracing provenance of a proposition, are grounded in the formal Lean theorem and proof. The pipeline that achieves this coupling consists of four broad components: (1) \textit{aligned generation}, in which a Lean proof is generated that is structurally similar to the written proof; (2) \textit{link-making}, in which an LLM maps Lean tactics and variables to their corresponding written steps and symbols; (3) \textit{dependency recovery}, in which proof state diffs expose which facts each tactic introduces and consumes; and (4) \textit{execution}, in which the proof is run on concrete inputs to extract intermediate values. Each component involves design decisions oriented toward making the coupling as tight as possible, making sure the formal representation and natural-language representation structurally aligned with one another.

\paragraph{Aligned generation of Lean proof.}
Given a theorem and its prose proof, we use a coding agent, Claude Opus 4.6, to generate a corresponding Lean proof (see \circled{1} in Figure~\ref{fig:pipeline}). In Lean, a proof is written as a sequence of \emph{tactics}, instructions that are applied to the current proof state (the set of known hypotheses and the remaining goal) until nothing is left to prove. To attempt to align the generated Lean proof to the original one, we instruct the agent to follow three rules. First, it should build up propositions in the same order as the prose. Second, the proof must use \texttt{have} statements, which introduce named, inspectable intermediate propositions into the proof. Third, each block of \texttt{have} statements should correspond to a step in the written proof. The agent annotates each Lean step with a comment identifying its corresponding prose step, and is instructed to use variable names that match values in the written proof.

\paragraph{Link-making}
\label{sneaky-link}
We pass the Lean and written proofs to an LLM (GPT-5.1), which outputs a mapping from Lean proof steps to written proof steps. The LLM extracts two levels of linkages: linking written proof blocks to steps of the Lean proof and linking written propositions to specific Lean variables. These linkages are used throughout the rest of the pipeline to connect the two representations (see \circled{5} in Figure~\ref{fig:pipeline})




\paragraph{Dependency recovery.}

We construct a dependency graph from the Lean proof (see \circled{3} in Figure~\ref{fig:pipeline}). We instruct Claude Code to extract the proof state at each tactic step using \texttt{lean\_goal}~\cite{lean_lsp_mcp}, which allows the agent to connect to the Lean compiler. A proof state is a list of active hypotheses (named facts currently in scope of the proof) together with the current goal. We take the diff of consecutive proof states to determine what each tactic introduces and consumes. The result is a directed acyclic graph whose nodes are the facts introduced throughout the proof and whose edges encode which earlier facts each new fact references.

We also build a step-level dependency graph using the mapping from Section ~\ref{sneaky-link}. Each fact node carries a proof step label, indicating which step of the original written proof it belongs to. From this we compute four maps: (1) which earlier steps this step relies on, and which facts link them, (2) which later steps use facts established by this step, (3) the specific facts this step consumes, and (4) the facts this step introduces.

\paragraph{Execution.}
To run the Lean proof on concrete values (see \circled{2} in Figure~\ref{fig:pipeline}), we first create a temporary copy of each \texttt{have} statement in the Lean proof, wrapped in a \texttt{try} block. Substituting a concrete value (such as $x = 5$) into a generalized proof would normally complete the proof prematurely; the \texttt{try} block isolates execution so that only the current step is evaluated against that value. We add an extra hypothesis to the \texttt{try} statement fixing the input variable to a concrete value, then use the \texttt{subst} tactic to substitute that value through the proof's hypotheses, followed by \texttt{simp} and \texttt{rfl}, Lean's built-in simplification and equality-checking tactics, to reduce intermediate arithmetic. The resulting intermediate proof states, the computed values of all intermediate expressions and sub-goals up to that point, are saved to JSON. A language model (GPT-5.1) is then prompted with this JSON state alongside the corresponding prose step and instructed to produce a template (see \circled{4} in Figure~\ref{fig:pipeline}), a version of the prose in which abstract quantities are replaced by template keys from the Lean state (for instance, generating \texttt{x\^{}2 = \{\{x\}\}\^{}2 = \{\{n\_def\}\}}). Concrete values from the Lean proof state are substituted into these templates programmatically and displayed in the interface (see \circled{6} in Figure~\ref{fig:pipeline}).




\begin{figure}
    \centering
    \includegraphics[width=\linewidth]{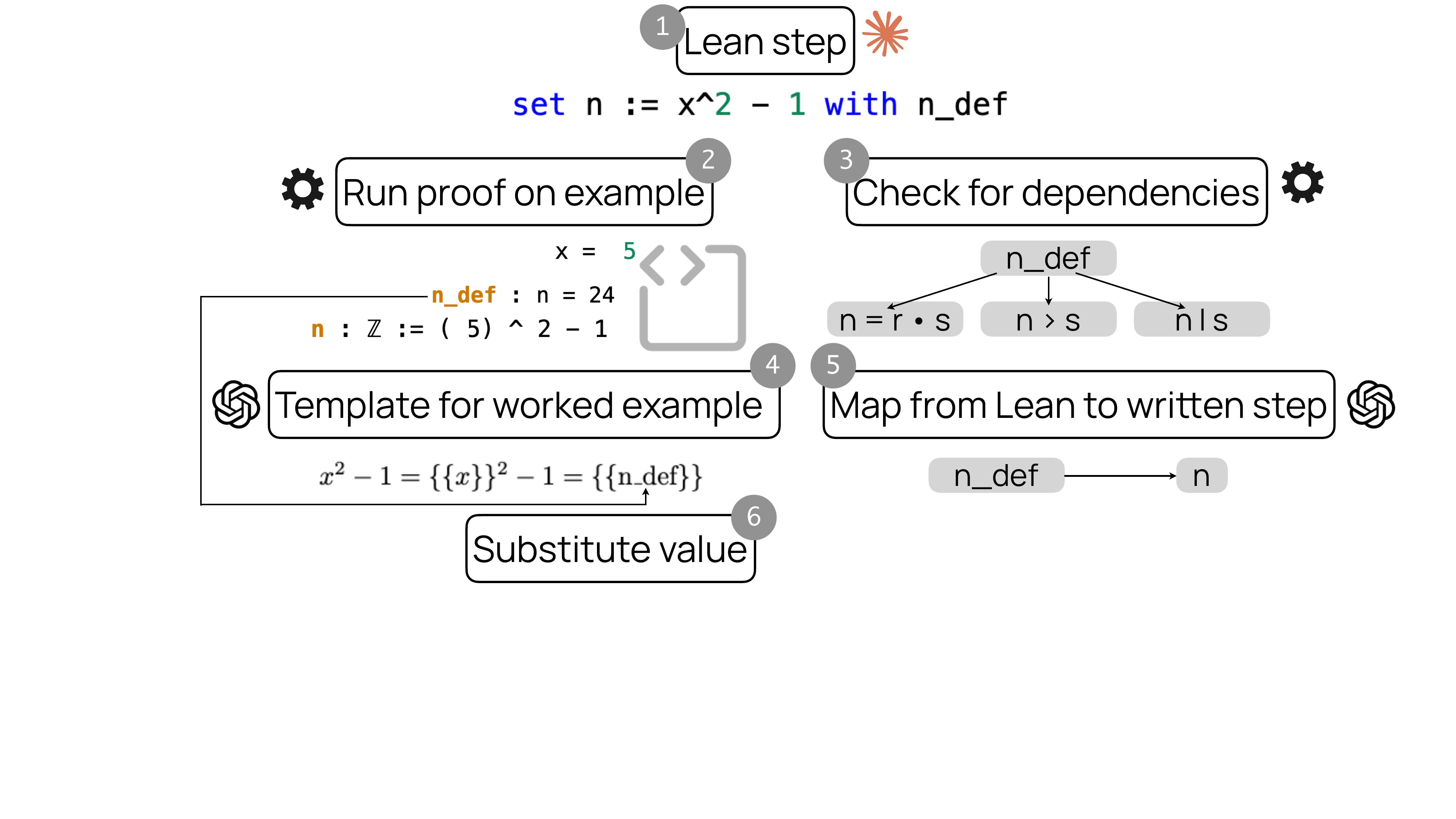}
    \caption{The pipeline for grounding interaction affordances in a formal representation. \mdseries First, the Lean proof is generated (\circled{1}). The system executes this proof on concrete inputs and extract the Lean proof state~(\circled{2}). A language model then generates a template for the proof step worked out on an example~(\circled{4}). Lean-computed values are programmatically substituted into this template~(\circled{6}) to produce the verified worked example shown to the reader. The consecutive states of the proof are diffed to create a dependency graph~(\circled{3}). A mapping is created that links Lean statements to natural-language propositions~(\circled{5}).
    }
    \label{fig:pipeline}
\end{figure}

\subsubsection{Technical notes}
\label{sec:tech-eval}

Our pipeline performs well on the two proofs used in the user study. We characterize its performance on one of the proofs (also in Figure~\ref{fig:implementation}): dependency edges are correctly recovered for seven of the eight proof steps, with one partial recovery, and worked example templates are correctly instantiated for all steps at $x = 2$. Lean proof generation required two manual iterations before producing a proof that compiled and satisfied our structural constraints, and template generation required three iterations to achieve worked examples that were brief, accurate, and informative. This suggests the pipeline is tractable for well-structured undergraduate-level proofs. More complex proofs might surface additional failure modes, which we discuss in future work.

We also acknowledge some limitations to our approach. We ground each limitation in examples drawn from a case-study evaluation on two theorems in number theory from the ProofNet benchmark~\cite{azerbayev2022proofnet}, which is a set of undergraduate math theorem statements, their natural language proofs, and the Lean formalization of the theorem statement (but not the proof): Theorem~1 (for all odd $n$, $8 \mid n^2 - 1$) and Theorem~2 ($3x^2 + 2 = y^2$ has no integer solution). 

\paragraph{Block-to-step mapping.}
We use an LLM to map natural-language sentences to Lean code, which can be wrong when the prose and Lean proofs follow different proof strategies. In Theorem~1, for instance, the written proof factors $n^2 - 1$ as $(n+1)(n-1)$ while the Lean proof substitutes $n = 2k+1$; the LLM matches these correctly, but the algebraic mismatch can still affect downstream components. However, our system works with any API-accessible language model and will improve as frontier models do.

\paragraph{Dependency graph.}

Diffing consecutive proof states can make bookkeeping tactics like \texttt{rfl} appear as spurious nodes, and closing tactics like \texttt{omega} or \texttt{contradiction} leave no node despite discharging the goal (as in Theorem~2, where the final step has no incoming edges). Resolving these requires either reasoning about the semantic importance of proof steps or deeper analysis of Lean proof terms, neither straightforward. 
\section{Study}
To assess how these sorts of formally-backed explorable features might support exploration and understanding, this study focused on the comparison of explorable theorems to reasonable modern alternative for explaining theorems, a written proof plus chatbot, with fixed theorem review tasks and success assessments. 

We structured our study around the following questions:

\begin{itemize}
    \item \textbf{RQ1}: How does the addition of the explorability features affect understanding of the original written artifact? 
    \item \textbf{RQ2}: Do readers use the explorability features and if so, how do they shape the reading process? 
\end{itemize}

\paragraph{Participants.}
We recruited participants who were comfortable reading mathematical proofs at the undergraduate level. All 16 recruited participants were bachelor's students in computer science, recruited from academic mailing lists at a university. 15 had completed an introductory proof course and 11 had taken advanced coursework requiring proofs (participants could select multiple options here). After the study, 11 participants rated themselves very familiar with proofs of the complexity used in the study, 4 rated themselves familiar, and 1 somewhat familiar.

\paragraph{Baseline}
In this condition, participants were provided with the exact same written theorem and proof text used in the explorable theorems condition. We selected the Gemini~\cite{team2023gemini} chatbot as our baseline. Participants were given access to a set of pre-written prompts that they could copy and paste directly into the Gemini interface. These prompts were designed to support proof comprehension in ways that explorable theorems does: (1) providing the theorem and proof as context alone, and (2) prompting Gemini to explain how the proof applies to a specific value, given the theorem statement and proof as context. Participants were also free to ask Gemini any question of their choosing. This choice was motivated by two considerations. First, Gemini has demonstrated strong performance on general-purpose mathematical reasoning and zero-shot proof generation~\cite{Hubertetal2025}. Second, using an AI chatbot as a baseline reflects a modern and increasingly tool used in math education~\cite{mohammad2025college, pepin2025scoping}, allowing us to directly compare two ways to read proofs: interacting with a chatbot versus using the explorable theorems system. In our pilot tests, Gemini yielded correct answers, so any differences in comprehension outcomes are less likely attributable to the correctness of the mathematical content presented.

\paragraph{Procedure.}
Each participant completed a one-hour study session. The study was conducted as a remote session. Participants accessed the web-based study interfaces through their own computers, with remote access provided through a locally hosted deployment exposed via ngrok. Sessions included screen recording and either audio recording or researcher note-taking, depending on participant preference. To minimize demand characteristics, we referred to tools with neutral names (``Tool S'' for explorable theorems and ``Tool R'' for the baseline). We counterbalanced the order in which participants used the tools. After consenting and filling out a background questionnaire, participants completed a short tutorial, two proof understanding tasks, two sets of proof-comprehension questions, and two proof-ordering puzzles.

\paragraph{Tutorial.}
Before beginning the tasks, participants completed a guided tutorial. We introduced the features of each tool and had participants complete short exercises practicing each afforded feature. For example, for the baseline condition, this included copying and pasting prompts and prompting Gemini freely. For the explorable theorems tool, this included sliding the input slider, viewing the proof worked out step by step, trying their own examples, and exploring proof dependencies. Each tool was introduced immediately before the task in which it was used.

\paragraph{Tasks.}
Participants completed two theorem-understanding tasks. The theorems can be found in Appendix~\ref{sec:study-materials}. In each task, they were instructed to study a theorem and its proof well enough to answer detailed questions from memory, ``as if preparing for a short quiz.'' Each task lasted 15 minutes. Participants began by reading the first theorem and its proof, then advanced to the second theorem and proof when they felt ready. If they had not moved on by the 10-minute mark, they were prompted to do so\footnote{This soft time limit ensured that participants had at least 5 minutes to study the second theorem, as one comprehension question required understanding both theorems and their proofs. Additionally, there was also a prototype feature for comparing proofs; it appeared useful, but it was not the main focus of this analysis}. 

Following the theorem-understanding tasks, participants answered five proof-comprehension questions per theorem, adapted from established proof assessment models~\cite{mejia-ramos2012assessment, davies2020comparative}. These questions can be found in Appendix~\ref{questions}. The first required participants to summarize the proof in their own words; this was used for a comparative judgement of proof summaries, a measure of student comprehension quality stated in prior literature~\cite{davies2020comparative}. The remaining four questions followed from established dimensions of proof comprehension~\cite{mejia-ramos2012assessment}\footnote{These included justifying internal dependencies (such as why a specific mechanism was used), tracing inferences through a concrete example to verify the necessity of premises, interrogating the theorem’s assumptions, and identifying shared strategies across both studied proofs.}.

To assess recall of proof structure, participants completed a Parsons puzzle~\cite{parsons2006parson} after each theorem, reconstructing the proof from memory. Each puzzle presented the proof steps in scrambled order, with ordering words such as ``first'' and ``lastly'' removed, alongside 3--4 fabricated distractor steps. Participants arranged the steps into the correct order. Regardless of tool order, participants always completed the comprehension questions and corresponding puzzle for the first theorem before moving on to the second.

\paragraph{Questionnaires}

Participants completed three sets of questionnaires during the session. After each theorem-understanding task, they completed a subset of the NASA-TLX workload assessment~\cite{hart1988development} and reported the usefulness of the tool and its individual features on 4-point usefulness scale (not at all useful, somewhat useful, useful, very useful). After completing both the proof-comprehension questions and the proof-ordering puzzle, participants filled out a final questionnaire about their experience using both tools.

\paragraph{Measurements and analysis.}

To optimize our evaluation resources, we split the grading process between two evaluators based on the nature of the assessment. A senior computer science faculty member with experience teaching proof-based mathematics ranked participants' proof summary responses using a TrueSkill ranking model~\cite{herbrich2006trueskill}. Rather than requiring full pairwise comparisons across all 16 responses, which would demand 120 judgments, we used an adaptive design with a fixed budget of 40 comparisons per condition, yielding approximately 5 comparisons per response. Pair selection was adaptive: responses were initialized with equal ratings, and at each step the algorithm selected the most closely matched pair for comparison, updating quality estimates via TrueSkill after each judgment. The expert was blind to which condition each summary was written from. 

For the fine-grained response evaluation, a computer science graduate student with prior discrete math teaching experience evaluated each response along three rubric dimensions adapted from an established proof comprehension assessment model~\cite{mejia-ramos2012assessment}, designed to capture different aspects of proof understanding: (1) correctness, (2) whether the response referenced specific steps of the proof (a proxy for evaluating rigor of the proof summary~\cite{TattonBrown2019RigourAI}), and (3) whether it illustrated reasoning with concrete examples. The grader was blind to condition. The same grader then performed a comparative ranking of all responses, pooled across tasks and conditions, along dimensions of mathematical understanding, level of detail, and clarity, assigning each response to one of five evenly distributed tiers. This forced distribution allows tier assignments to serve as a relative measure of quality.

We used non-parametric tests for our analyses. We used Mann--Whitney $U$ tests for between-subjects comparisons of correctness scores (0--3 ordinal scale), rubric dimension ratings, and TrueSkill instructor ratings, reporting Cohen's $d$ as the effect size. For binary outcomes (whether participants referenced specific proof steps or included concrete examples), we used Fisher's exact tests. When pooling responses across both tasks, where each participant contributes data in both conditions, we used Wilcoxon signed-rank tests, reporting Cohen's $d_z$ as the paired effect size.

For the proof puzzle, we compared position accuracy, pairwise accuracy, and behavioral metrics between conditions using paired $t$-tests and Wilcoxon signed-rank tests, reporting Cohen's $d_z$, and applied McNemar's test for perfect solve rate (a binary outcome). For Likert-scale responses we used two-tailed Wilcoxon signed-rank tests, and tool preference was evaluated with a binomial test against chance ($p = 0.5$).



We analyzed open-ended responses through thematic analysis~\cite{blandford2016qualitative}: one author performed open coding to identify recurring ideas, which were then refined into key themes. We also conducted video analysis of interaction patterns for the 12 participants who consented to recording.\footnote{Four participants did not consent to video recording and are excluded from interaction-level analyses.}

\section{Explorable theorems...}

Here, we report the results from our usability study.

\begin{figure}
    \centering
    \includegraphics[width=\linewidth]{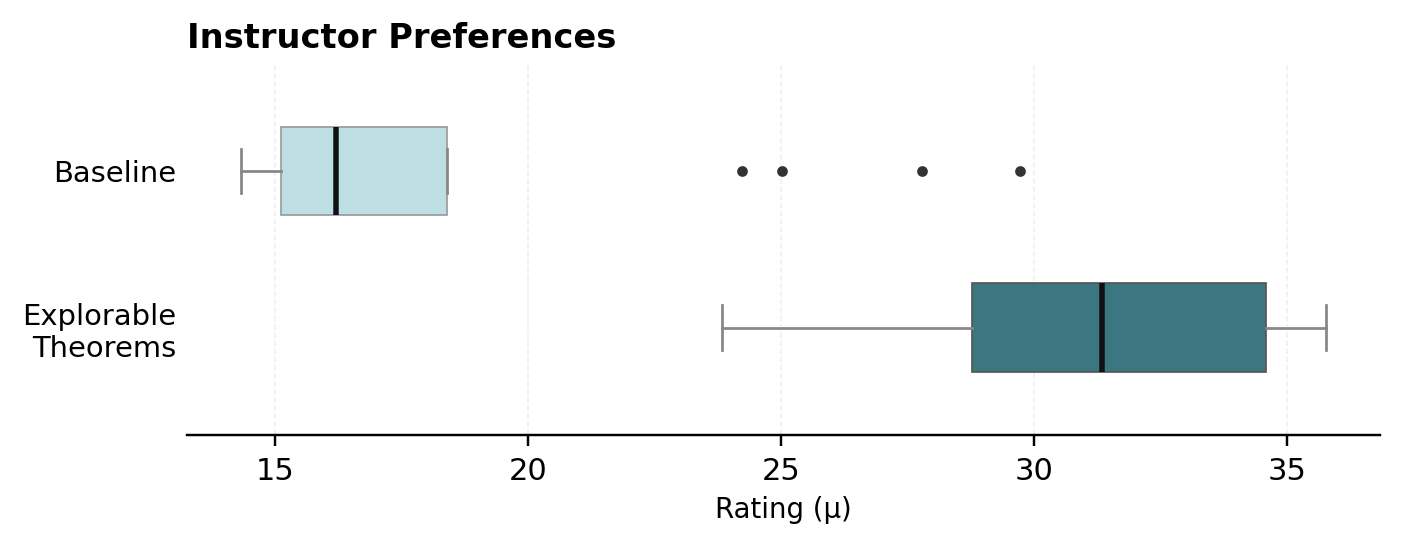}
    \caption{Instructor preferences by condition. \mdseries Box plot shows the distribution of TrueSkill ratings assigned to participant responses through pairwise comparisons by a mathematics instructor, combined across both study tasks. The instructor preferred responses written with explorable theorems in 94.9\% of pairwise comparisons.}
    \vspace{-1em}
    \label{fig:pairwise}
\end{figure}

\subsection{...improves proof comprehension (RQ1)}

Our evaluation provides evidence that participants' written restatements of proofs were judged by an instructor to better demonstrate knowledge of the proof with explorable theorems than with the baseline. A mathematics instructor preferred responses written with explorable theorems significantly more often than responses written with the baseline (see Figure~\ref{fig:pairwise}), and these responses also received significantly higher grades across both study tasks (Task 1: ET preferred in 90.6\% of pairwise comparisons, U=58.0, p=.005, d=2.14; Task 2: ET preferred in 93.8\% of pairwise comparisons, U=60.0, p=.002, d=3.47).

Consider the following example. Participants were asked to describe the proof of this theorem, ``For all integers $n$, $n(n+1)(n+2)(n+3)+1$ is a perfect square'', in their own words.

\paragraph{Answer A (with explorable theorems).}
\begin{quote}
$p =$ the product of the first and last term $n(n+3)$, $(n+1)(n+2)$ becomes $p+2$. multiply the outer numbers $n \cdot (n+3) = n^2 + 3n$ to be $p$ and inner numbers $(n+1)\cdot(n+2) = n^2 + 3n + 2$ to be $q$. note that $q = p + 2$. so then $p \cdot q = p \cdot (p + 2) = p^2 + 2p$, add 1 = $p^2 + 2p + 1 = (p + 1)^2$. then $p = n^2 + 3n$, substitute this, and you get the whole thing $(p + 1)^2 = (n^2+3n+1)^2$ this is a perfect square.
\end{quote}

\paragraph{Answer B (with baseline).}
\begin{quote}
We group the middle terms together and the outer terms together, forming $p$ and $q$. Then we write $q$ in terms of $p$ and factor the remaining expression, which shows that it is a perfect square in terms of $p$. We then substitute the expression for $p$ back in to get a perfect square in terms of $n$.
\end{quote}

When asked which answer was better in terms of answer quality, the instructor chose Answer A and explained:

\begin{quote}
I would give Answer A full points. Answer B is imprecise and can be interpreted the wrong way...students must learn to formulate math answers precisely, and using math notation is necessary for that.
\end{quote}

When another grader graded these proofs individually, we found that proof summaries written after using explorable theorems were graded to be of higher quality in the following ways: they were graded as including significantly more explicit proof steps for Task 1 (100\% vs 38\%, p=.026) though not for Task 2 (100\% vs 57\%, p=.077), to demonstrate significantly more understanding of the underlying mathematical concepts, were significantly easier to follow, and were significantly more detailed (Task 1: U=53.5, p=.025, d=1.41; Task 2: U=56.5, p=.010, d=1.78).\footnote{The test statistics for understanding, clarity, and detail were identical across both tasks, suggesting these dimensions were highly correlated in the grader's assessments and may reflect a single latent quality factor.} Proof summaries were also graded to be significantly more correct for Task 2 (U=46.0, p=.022, d=1.45) though not for Task 1 (U=40.0, p=.129, d=0.66).


\begin{figure}
    \centering
    \includegraphics[width=\linewidth]{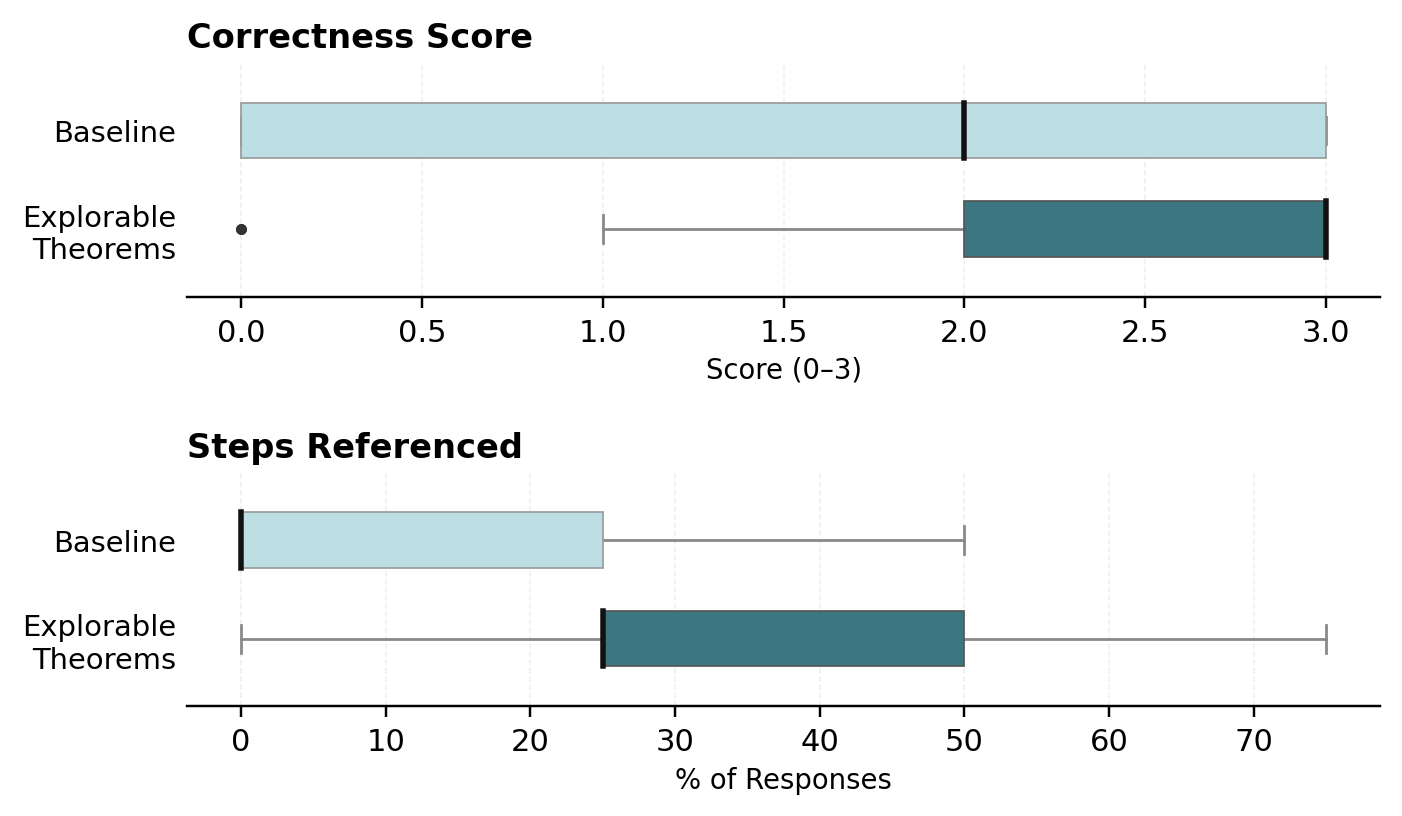}
    \caption{Response quality by condition. \mdseries Box plots show the distribution of graded correctness scores and percentage of responses referencing explicit proof steps, combined across both theorem tasks. Vertical lines within each box indicate the median. Responses written after using explorable theorems were graded as significantly more correct and referenced specific proof steps more frequently.}
    \vspace{-1em}
    \label{fig:proof-grades}
\end{figure}

Beyond summarizing the proof, participants also answered other proof comprehension questions. Responses in the explorable theorems condition were graded as more correct and referenced specific proof steps more often (see Figure~\ref{fig:proof-grades}) than responses in the baseline condition (correctness: ET $M=2.58$ vs.\ Baseline $M=1.58$; Wilcoxon signed-rank $W=1.5$, $p=.001$, $d_z=1.32$; steps referenced: ET $35\%$ vs.\ Baseline $12\%$, $W=14.0$, $p=.024$).

Consider the following examples of participants who received the average scores in their respective conditions. In response to a question about how the theorem ``For all integers $n$, $n(n+1)(n+2)(n+3)+1$ is a perfect square.'' could be further generalized, one participant in the explorable theorems condition wrote:

\begin{quote}
``Any case where you're able to pair the outer terms and the inner terms, the second product changes from the first by a constant. Then, if you add the right constant to pad that expression, the whole expression becomes a square. So the theorem can be generalized by changing the spacing and adjusting the added constant.'' 
\end{quote}

A participant in the baseline condition wrote:

\begin{quote}
``Yes, as long as the pairs of terms have a difference of a constant there's a $+1$ in the end.''
\end{quote}

The response in explorable theorems identifies the structural mechanism, that the pairing produces expressions differing by a constant, and that adjusting both the spacing and the constant preserves the perfect square property. The response in the baseline gestures at the right idea (constant difference between pairs) but incorrectly claims the $+1$ stays fixed, missing that the added constant must also change with the spacing (it's actually $r^4$ for spacing $r$). Three participants even arrived at the exact generalization of $n(n+r)(n+2r)(n+3r) + r^4$ in the explorable theorems condition.

\subsection{...does not affect ability to recall proof order (RQ1)}

Participants performed similarly in both conditions on the proof-order recall task. Position accuracy, task duration, and number of moves did not differ significantly between conditions.

Position accuracy was high in both conditions (ET: $M=0.932$, $SD=0.112$; Baseline: $M=0.951$, $SD=0.073$) and did not differ significantly (paired $t(15)=-0.610$, $p=0.551$, Cohen's $d_z=-0.15$). Task duration was also similar across conditions (ET: $M=148.7$s, $SD=65.4$; Baseline: $M=157.7$s, $SD=50.0$; paired $t(15)=-0.559$, $p=0.585$, $d_z=-0.14$). Likewise, the number of effective moves did not differ significantly (ET: $M=33.4$, $SD=7.5$; Baseline: $M=31.3$, $SD=8.9$; paired $t(15)=0.693$, $p=0.499$, $d_z=0.17$). This null result is informative: participants in both conditions understood the surface structure of the proof equally well, suggesting that the comprehension gains observed in the written responses between conditions reflect a deeper understanding of proof logic rather than surface-level familiarity of the proof steps.

\subsection{...enables useful interactions (RQ2)}

\begin{figure}
    \centering
    \includegraphics[width=\linewidth]{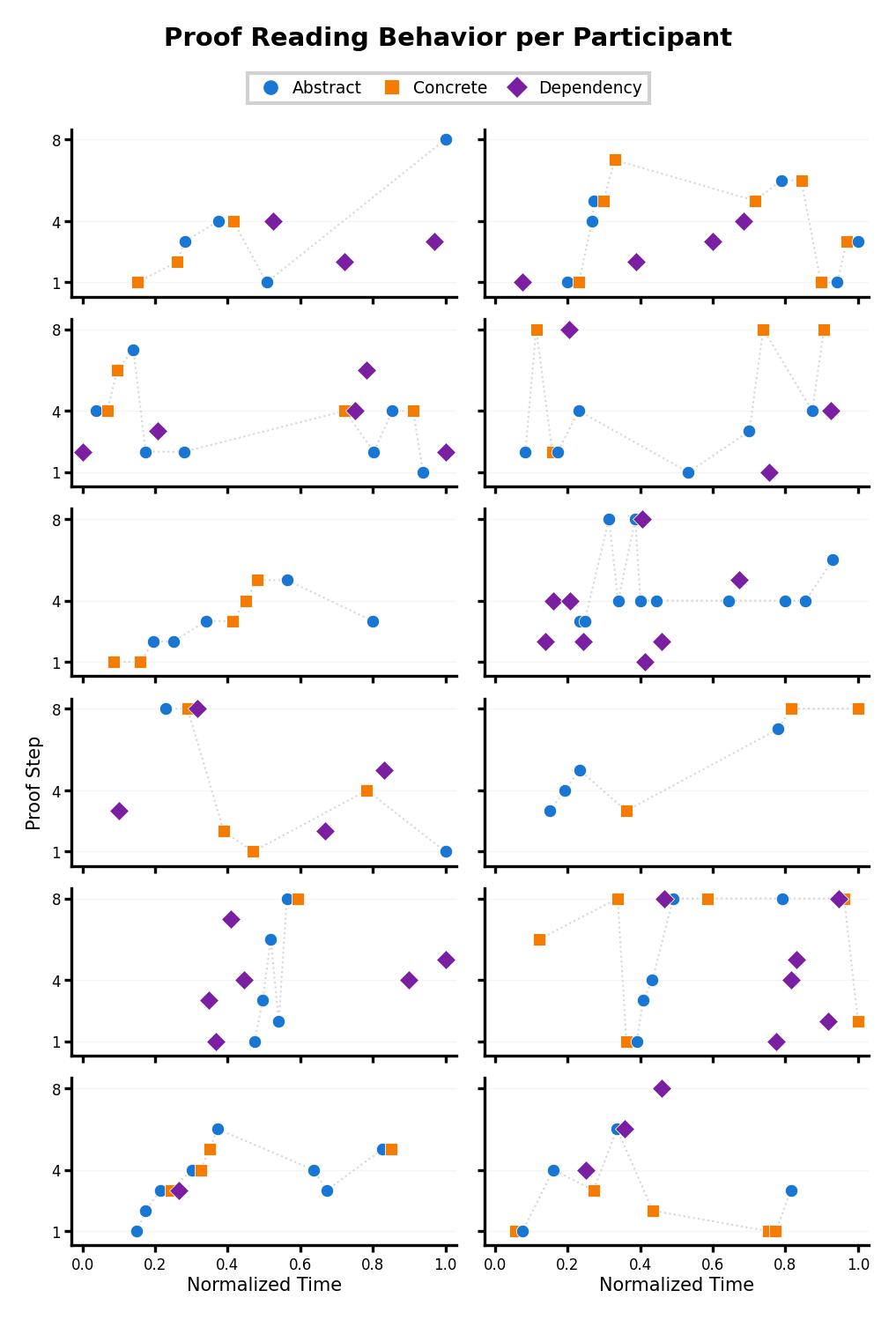}
    \caption{Timeline of proof-reading behavior for each participant during session. \mdseries The y-axis indicates which proof step (1–8) the participant is engaging with. Blue circles and orange squares denote reading in the abstract and concrete (instantiation) views, respectively. Purple diamonds mark dependency link clicks. 
    }
    \vspace{-1em}
    \label{fig:timeline}
\end{figure}

Participants reported higher perceived success with explorable theorems (see Figure~\ref{fig:likert}) (ET: $M=6.12$, $SD=0.72$; Baseline: $M=5.25$, $SD=1.18$ on a 1–7 scale; Mann–Whitney $U=183.0$, $p=0.032$; Wilcoxon signed-rank $W=3.5$, $p=0.012$) and felt less rushed (ET: $M=2.25$, $SD=1.00$; Baseline: $M=3.12$, $SD=1.31$; Wilcoxon signed-rank $W=3.5$, $p=0.007$). 13 of 16 participants (81\%) preferred explorable theorems, and 3 of 16 (19\%) preferred the baseline (binomial test, $p=0.021$). There was no significant difference in which tool participants said they would use in the future for understanding math theorems (ET: $M=5.62$, $SD=1.45$; Baseline: $M=5.00$, $SD=1.03$; Wilcoxon signed-rank, $p=0.114$) or in perceived mental demand (ET: $M=3.06$, $SD=1.65$; Baseline: $M=3.56$, $SD=1.41$; Mann–Whitney $p=0.281$). 

\begin{figure}
    \centering
    \includegraphics[width=\linewidth]{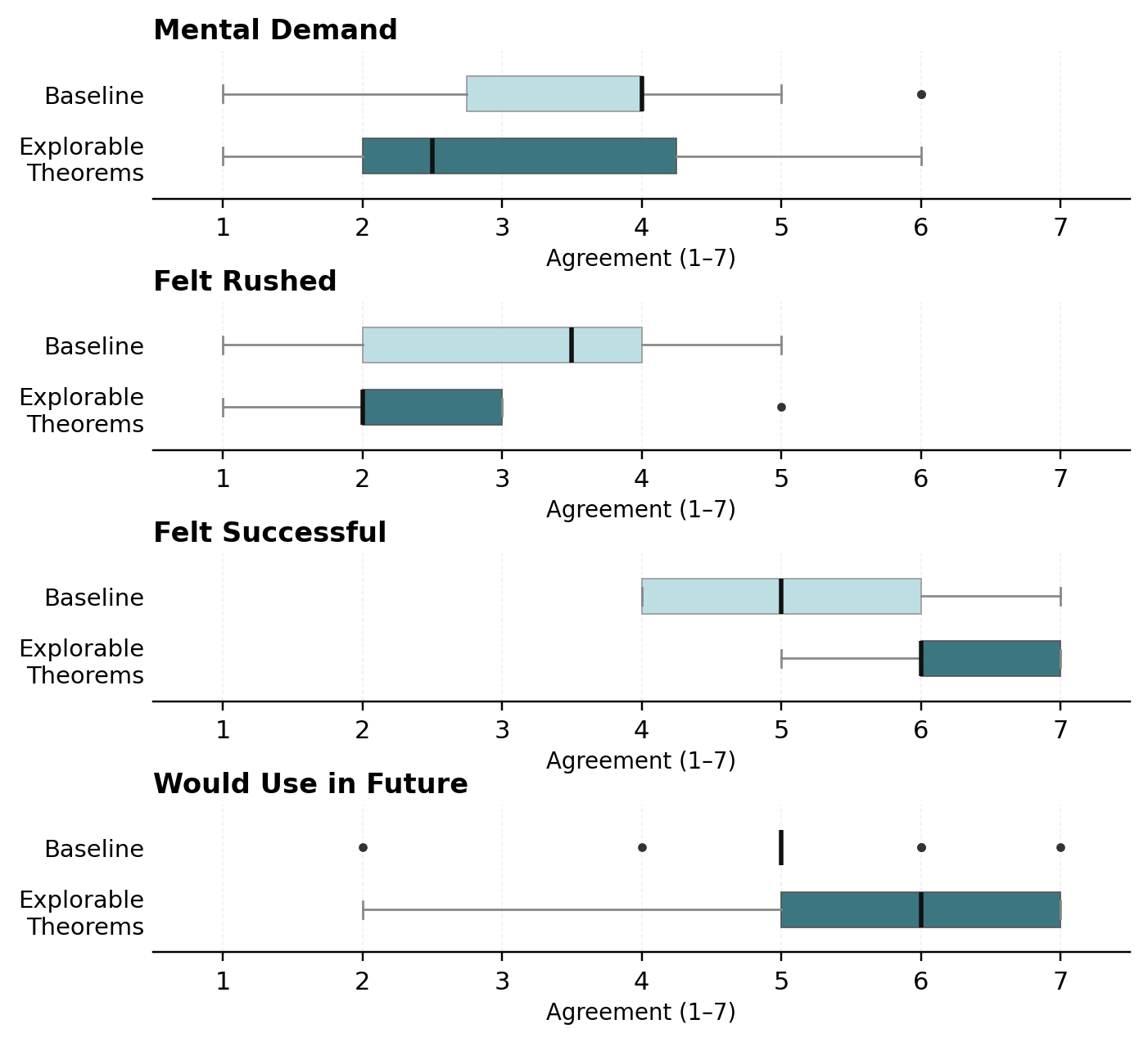}
    \caption{Self-reported task load by condition. \mdseries Box plots show the distribution of participant responses on a 1–7 agreement scale for four task load items, comparing explorable theorems to the baseline. Participants felt significantly less rushed and more successful with explorable theorems. No significant differences were found for mental demand or desire to use the tool in the future. }
    \label{fig:likert}
    \vspace{-1em}
\end{figure}

\paragraph{Feature usefulness.}
On the 4-point usefulness scale (1=``Not at all useful'' to 4=``Very useful''), the most useful feature was seeing mathematical expressions worked out on examples ($M=3.62$, $SD=0.89$), followed by slider colors showing valid and invalid ranges ($M=3.56$, $SD=0.73$), and checking whether the theorem holds for a user-defined example ($M=3.44$, $SD=0.63$). Participants also valued looking up prior dependent steps ($M=3.31$, $SD=0.87$) and checking whether an individual proof step holds on an example ($M=3.00$, $SD=0.73$). Overall, these ratings suggest that participants found the most value in example-based exploration, though structures that related to proof structure were also useful.

\paragraph{Emergent interactions.}

We detail how participants used Gemini to reveal what kinds of proof understanding they were seeking. In the baseline, participants asked an average of $2.8$ questions to Gemini ($SD = 2.4$, range $0$--$9$). The 44 total queries fell into three broad categories: requesting proof summaries or explanations (17 queries, 38.6\%), asking about specific numeric examples worked out on the proof (18 queries, 40.9\%), and reasoning about proof structure such as individual steps, hypotheses, or conclusions (9 queries, 20.5\%). One participant (P10) engaged in markedly deeper proof reasoning, all 6 of their questions were about proof steps, hypotheses, and conclusions, asking, for instance, why a particular ordering constraint was necessary and whether the argument held for negative numbers. One participant (P14) did not use Gemini at all and opted to only read the written proof. That participants largely sought these same behaviors through the baseline (such as working out examples, examining proof structure) perhaps validates the design of explorable theorems' affordances.

Figure~\ref{fig:timeline} shows participants' proof reading behaviors when using explorable theorems. We discuss other kinds of reading behaviors. 

Participants made thorough use of \textit{example analysis}. With explorable theorems, participants selected $10.8$ examples ($SD = 8.1$), filtering for examples that were sustained selections over five seconds. Of these, $5.6$ ($SD = 3.6$) were examples where the theorem hypotheses were met, $3.8$ ($SD = 5.1$) were examples where the hypotheses were not met, and $1.4$ ($SD = 1.1$) were participants' own custom inputs. The utility of examples is articulated by P2: ``this helps me understand the value [of variable x]...and I know what we’re trying to show.''

A common action at the start of the session was an interaction we term stress testing: participants selected edge-case values to start reading the proof on. This included using the slider ($n = 9$) or typing out their own examples ($n=3$). P10 described wanting to both ``try some things where it's an edge case [and] something that's an intuitive value'' and then ``check out...values that are less intuitive and see whether [the theorem still held].'' This might suggest that the slider afforded testing about theorem scope.

Participants ($n=5$) also would toggle between in-scope and out-of-scope examples on the example inputs slider while reading the proof. For example, in Task 1 (the same as in Figure~\ref{fig:implementation}), P4 read the first four steps of the proof on the value $x = -2$ and then read the last four steps of the proof on the value $x = 2$. The theorem did not hold on both of these values; when asked about this behavior, P4 said they focused on these two example values where the theorem did not hold because ``the thinking or reasoning is upon me and I have to draw the dots between the steps of the proof and follow along.''

Participants made use of the \textit{flexibility} afforded by explorable theorems. 10 participants read proof steps out-of-order at least once, as demonstrated by lines in Figure~\ref{fig:timeline} going up and down. P3, for example, repeatedly cycled between steps 2 and 6 across the session, while P2's trajectory spans 17 step visits with 8 switches between abstract and concrete modes and 2 backward jumps. Ten participants used the dependency links at least once, indicating that dependency navigation is perhaps correlated with non-linear traversal of the proof structure. P5 articulated dependency links as ``the ability to track the history and future references of a proof or immediate step.'' Participants valued this ability: P9 noted they could ``easily track how the proof got certain variables by clicking on the [dependency link],'' and P10 appreciated seeing ``how each part of the proof interacted with the rest of it''.

Seven participants alternated between reading the abstract proof text and examining concrete instantiations within the same session, as visible in the interleaving of blue and orange markers in Figure~\ref{fig:timeline}. P2 switched modes 8 times across 17 step visits, while P3 switched 6 times across 14 visits. This interleaving suggests participants used the worked examples not as a separate reading pass but as an integrated strategy, reading a proof step, then checking its concrete instantiation (or vice versa) before proceeding. P2 echoed this sentiment of an integrated strategy, stating ``the text is showing the meat of the proof and the boxes are showing how it works out on an example...I'm doing two things in parallel.''


\section{Discussion}

Results from our study show that explorable theorems improved proof comprehension, drove participants to use and value every interaction affordance, and surfaced exploratory behaviors that affirm our design choices.

\subsubsection{Why did explorable theorems have an effect?}

What mechanisms might explain the comprehension gains we observed? Several are plausible and likely act in concert. The novelty of the interface may have increased engagement. The speed with which participants could set up and switch between examples reduced the friction of exploration. The concision of worked examples, showing only the values relevant to each step, may have made the proof easier to parse than a chatbot's verbose explanations. And dependency links may have helped participants build a more accurate mental model of proof structure. Participants' very frequent use of examples and strong ratings of their utility suggest that example-based exploration played a central role.

An interesting latent pattern emerges from Figure~\ref{fig:timeline}: participants appear to fall into two broad modes of engagement. Some participants heavily sampled examples across the slider, while others made comparatively less use of examples but engaged more with dependency links. This observation tentatively suggests that explorable theorems may support different reader strategies, example-driven and structure-driven, rather than only accommodating one single reading path, and that both strategies can support comprehension.

\subsubsection{A starter kit for formally-backed explanation interfaces}

While we validate formally-backed explanation interfaces in the context of mathematical proofs, the approach may generalize. We reflect on the design decisions that we believe drove the effect, and distill them into principles for others building similar systems.

\paragraph{Offload incidental difficulty.}
One might ask whether interactivity removes the productive struggle necessary for durable learning~\cite{bjork2011making}. Proof comprehension involves two distinct kinds of difficulty: \textit{essential} difficulty (understanding why a step follows or is necessary) and \textit{incidental} difficulty (tracking variables, mentally substituting values, recalling which step a symbol came from). We designed explorable theorems to offload incidental difficulty to the interface while preserving the essential difficulty. We believe this is why participants felt less rushed without a reduction in mental demand.

\paragraph{Stay close to the source.}
With explorable theorems, many interactions are grounded in the formal proof object. Worked example values are extracted from the Lean proof state; dependency edges are recovered by diffing Lean proof states; the worked example template mirrors the structure of the written step. This closeness to the source matters because distance introduces error: whenever AI is used to create a model or establish links, there is opportunity for mistranslation. We reduced this distance through several design decisions: prompting the agent to generate a Lean proof that structurally mirrors the prose and performing linking in small increments that leverage structural bridges put in place during generation. Reasoning about correctness and computation was delegated to the formal model wherever possible. We believe similar strategies, minimizing the role of AI inference and maximizing the role of the formal artifact, could reduce distance in other formally-backed explanation systems, such as tools for understanding programs, circuit diagrams, or biological pathways.

\subsection{Limitations}

Our pipeline has some technical limitations. First, because the Lean proof is AI-generated, it is not guaranteed to follow the original prose's logical path; agents sometimes produce structurally different arguments, or fail to complete the proof. Second, parts of our pipeline rely on LLMs which introduce the possibility of subtle errors through mistranslations from Lean to natural language. However, as AI tools for formal proof generation improve, agents will become better at producing Lean proofs that structurally mirror their prose counterparts.

Our study also has limitations. We focused on relatively short proofs with undergraduate mathematics students; the benefits of explorable theorems may differ for longer proofs, more advanced students, or experts encountering unfamiliar material. Additionally, while our evaluation measured comprehension through reconstruction tasks, written explanations, and subjective ratings, it does not capture longer-term outcomes such as retention, transfer to new problems, or the ability to independently construct related proofs.

\subsection{Future work}

Future work could explore three main directions. First, systems could combine explorable interfaces with conversational interaction. Rather than replacing the formal backing, chat would operate over it, allowing users to ask what breaks if a hypothesis is removed or how a proof changes under different instantiations. Second, the approach could be extended to broader mathematical content to understand how it holds up in messier, realistic settings. This includes handling research-level theorems lacking complete Lean proofs, complex structures that do not map cleanly to natural language, and non-numeric inputs like graphs. Finally, the principle of formally-backed explorability could extend beyond mathematics. As formal verification tools are applied to end-user plans~\cite{lee2025veriplan} and complex reasoning~\cite{jiang2024leanreasoner}, future systems might pair everyday artifacts, like a travel itinerary or a tax document, with a formalized Lean counterpart to make them fully explorable.

\section{Conclusion}


We introduce a technique for formally-based explanation systems and instantiate it as \textit{explorable theorems}, a system which makes mathematical proofs interactively explorable. By linking a natural-language proof to its formal Lean counterpart, our system leverages underlying proof states to let readers inspect steps worked out on concrete examples and trace logical dependencies. Results from a lab study show that explorable theorems improved proof comprehension and led participants to use and value its interaction affordances.


\bibliographystyle{ACM-Reference-Format}
\bibliography{refs,andrew-base,andrew-extras}

\appendix
\section{Appendix}

\subsection{Narrative Scenario}
\label{sec:narrative-scenario}

We demonstrate how the interaction patterns in explorable theorems can support productive inference over a longer period of use with an example of its usage in understanding a new proof. Ram is a sophomore in a discrete mathematics class. He is trying to understand a theorem from the elementary number theory section that was introduced as an exercise in constructing direct proofs. Beyond understanding this particular theorem, Ram wants to build intuition for the overall proof structure so he can use it on an upcoming exam.

Ram first reads the theorem statement:

\begin{quote}
For all integers $x$ and odd integers $n > 1$, if $x > 1$, then $x^n + 1$ is not prime.
\end{quote}

It is not immediately clear what this theorem is saying or why it should be true. Ram loads the theorem into the explorable theorems viewer to look into this.

\paragraph{Seeing the boundaries of a theorem with the theorem input evaluator.}
Ram first uses the \textit{theorem input evaluator} to see what the theorem actually says for a single concrete example. He sets $x = 2$ and $n = 3$, and the evaluator shows that the assumptions are satisfied. The tool displays the concrete computation directly beneath: $x^n + 1 = 2^3 + 1 = 9$, which is indeed not prime. Seeing a valid example helps the theorem click, but Ram wonders why the strict conditions, like $n > 1$ and $x > 1$, are necessary. He uses the evaluator's sliders to test these boundaries. Sliding down to $x = 1$ and $n = 1$, the evaluator immediately flags that the assumptions are no longer met. The computation updates to show $1^1 + 1 = 2$, which is prime, breaking the theorem's conclusion. Ram can now clearly see the theorem's boundaries: the logic fails exactly when $n \leq 1$ or $x \leq 1$. He understands what the theorem claims and when it applies, but he still does not know why the valid cases always work. He begins to read the proof.

\paragraph{Seeing a worked example on each proof step.}


Ram really wants to understand why the proof is true. The proof's strategy is to show that $N = x^n + 1$ is never prime by factoring it into two smaller integers, defined in the text as $r$ and $s$. To understand how these factors are constructed, Ram clicks on their definitions. The \textit{proof step dependency inspector} shows the definition of these terms: $r = x + 1$ and $s$ is an alternating sum: $s = x^{n-1} - x^{n-2} + \cdots - x + 1$. It is not immediately clear why multiplying these two specific expressions will always equal $x^n + 1$. Looking at his running example ($x = 2, n = 3$), the \textit{worked example display} beneath the step shows $r = 2 + 1 = 3$ and $s = 2^2 - 2 + 1 = 3$. Since $3 \times 3 = 9$, the factorization checks out for this specific case. But Ram wants to see the structural pattern. He slides the inputs to $x = 3, n = 3$ and sees $N = 28$, with factors $r = 4$ and $s = 7$. The display shows the expanded form: $3^3 + 1 = (3 + 1)(3^2 - 3 + 1)$. He stares at this expansion. If he distributes the $(3 + 1)$ across the $(3^2 - 3 + 1)$, he gets $3^3 - 3^2 + 3 + 3^2 - 3 + 1$. The $-3^2$ and $+3^2$ cancel. The $+3$ and $-3$ cancel. All that is left is $3^3 + 1$. The concrete numbers make the abstract trick obvious: the signs in $s$ alternate so that when multiplied by $r = x + 1$, every middle term cancels out. Ram keeps sliding the inputs and watching the display update, and the cancellation pattern holds every time.

\paragraph{Following a worked example to understand a key condition.}
Ram reaches the most complex step of the proof, which states that $s > 1$: since $n$ is odd, the alternating sum $s = x^{n-1} - x^{n-2} + \cdots - x + 1$ has an even number of powers of $x$ followed by a final $+1$, so the terms can be grouped as $s = (x^{n-1} - x^{n-2}) + \cdots + (x^2 - x) + 1$. Each grouped pair satisfies $x^m - x^{m-1} = x^{m-1}(x-1) > 0$ because $x > 1$, so $s$ is a sum of positive terms plus $1$, giving $s > 1$. This is far too dense for Ram to make sense of.

Ram uses the \textit{worked example display} to follow along the proof step on a concrete example. For $n = 3$, $x = 2$, he sees $s = (2^2 - 2) + 1 = 2 + 1 = 3 > 1$. He slides to $n = 5$ and watches the display show four powers of $x$ followed by $+1$: exactly the even number of powers the step describes. He slides through $n = 5, 7, 9$ and sees the same pattern hold each time. The reason $n$ must be odd becomes clear: an odd $n$ guarantees an even number of powers of $x$, which is what makes the pairing argument work. If $n$ were even, the grouping would fail. 


Ram then toggles to $n = 1$ using the theorem input evaluator and sees the proof break at this step: with $n = 1$, $s = 1$, and the condition $s > 1$ no longer holds. The system flags that the proof breaks on this step. This makes clear why $n$ must be both odd and greater than $1$.

\paragraph{Seeing the proof through to the end.}
The remaining steps show that $N > s$ (since $N = r \cdot s$ and $r > 1$) and that $s > 1$, together establishing that $s$ is a proper factor of $N$, so $N$ is not prime. With the worked example displayed right below each step of the proof ($N = 9$, $r = 3$, $s = 3$), Ram finds these steps straightforward.

After his read of the proof, Ram understands not just the high-level proof strategy, find a proper factor to show $N$ is not prime, but also the reasons each condition is required: $x > 1$ ensures the grouped pairs are positive, $n$ must be odd to make the even-pairing argument work, and $n > 1$ ensures $s > 1$. He arrived at these insights not by being told them, but by tinkering with the inputs and inspecting the proof's logic. Because explorable theorems is backed by executable Lean code, Ram is seeing machine-checked proof states that tell him how the math actually works out.

One might ask whether making the mathematics this interactive removes the productive struggle necessary for durable learning~\cite{bjork2011making}. We argue that explorable theorems does not remove the difficulty of understanding the math; rather, it reduces unnecessary cognitive load. Mentally tracking variable substitutions and expanding algebraic terms requires significant working memory~\cite{sweller1988cognitive, ayres2001systematic, sweller2011cognitive}. By offloading this computation to the interface, the system frees the reader to focus their efforts on the proof's actual structure.




\section{Study Materials}
\label{sec:study-materials}

\subsection{Theorems and Proofs}

\subsubsection{Task 1: Composite Numbers}

\begin{theorem}
For all integers $x$, if $x > 2$, then $x^2 - 1$ is not prime.
\end{theorem}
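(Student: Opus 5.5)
The approach is to exhibit an explicit nontrivial factorization of $n = x^2 - 1$, namely $n = (x-1)(x+1)$, and show that both factors are strictly greater than $1$. A prime has no factorization into two integers each exceeding $1$, so this settles the claim. The proof is forward reasoning throughout, which also keeps each step short enough to align with a Lean \texttt{have} statement.

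The steps, in order, are as follows.
\begin{enumerate}
\item Fix an integer $x$ with $x > 2$ and set $n = x^2 - 1$.
\item Establish the identity $x^2 - 1 = (x-1)(x+1)$ by expanding the right-hand side.
\item From $x > 2$, derive $x - 1 > 1$. Over the integers this gives $x - 1 \ge 2$.
\item Derive $x + 1 > x - 1 > 1$, so both factors are at least $2$.
\item Note that $n$ itself is positive and in fact $n \ge 8$, so primality is meaningful and $n$ is not a unit.
\item Conclude that $x - 1$ is a divisor of $n$ with $1 < x - 1 < n$. The strict inequality $x - 1 < n$ holds because $n = (x-1)(x+1)$ and $x + 1 > 1$. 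Hence $n$ has a proper divisor and is not prime.
\end{enumerate}

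I expect little real obstacle here. The only point needing care is the role of the hypothesis $x > 2$, as opposed to $x > 1$. At $x = 2$ we get $x - 1 = 1$ and $n = 3$, which is prime, so the step ``$x - 1 > 1$'' is exactly where the proof breaks at the boundary.

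A second subtlety is that $x$ ranges over integers, so negative values must be excluded before the inequalities are applied. The hypothesis $x > 2$ already does this. In a formal setting, I would additionally convert $n$ to a natural number, or use the integer notion of primality, and discharge the bound $1 < x - 1 < n$ with linear and nonlinear arithmetic (e.g.\ \texttt{nlinarith}). That conversion is the step I would expect to take the most fiddling.
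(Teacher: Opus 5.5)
Your proposal is correct and follows essentially the same route as the paper's proof: factor $x^2-1=(x-1)(x+1)$, use $x>2$ to get both factors above $1$, and exhibit a divisor strictly between $1$ and $n$. The only cosmetic difference is the choice of witness: the paper uses $s=x+1$ and gets $n>s$ by multiplying $r=x-1>1$ by $s$, whereas you use $x-1$ and bound it above via $x+1>1$.
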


\begin{proof}
Let $x$ be an arbitrary integer. Assume that $x > 2$. Let us denote $x^2 - 1$ by $n$. We wish to show that $n$ is not prime. We will do that by showing that $n$ has a factor that is neither 1 nor equal to $n$.

First observe that $x^2 - 1 = (x-1)(x+1)$. Let us also denote $x - 1$ by $r$ and $x + 1$ by $s$. Note that both $r$ and $s$ are factors of $n$, since $n = r \cdot s = s \cdot r$.

Now, because $x > 2$, we have $r = x - 1 > 1$. Moreover, we show that $s > 1$ by the following reasoning: $x > 2$, so $x + 1 > 3$, so $s > 1$.

Finally, multiply both sides of $r > 1$ with $s$. We get $r \cdot s > s$. However $r \cdot s = n$. Therefore we have derived $n > s$. Since $n$ has a factor $s$ such that $1 < s < n$, $n$ is composite.
\end{proof}

\begin{theorem}
For all integers $x$ and odd integers $n > 1$, if $x > 1$, then $x^n + 1$ is not prime.
\end{theorem}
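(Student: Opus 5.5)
We follow the same strategy as the proof of the preceding theorem: exhibit a factor $s$ of $N = x^n + 1$ with $1 < s < N$. Fix an integer $x > 1$ and an odd integer $n > 1$, and set $N = x^n + 1$. Since $n$ is odd, $(-1)^n = -1$, so $-1$ is a root of $t^n + 1$ and $t+1$ divides $t^n+1$. Concretely, we use the factorization
\[
x^n + 1 = (x+1)\,(x^{n-1} - x^{n-2} + \cdots - x + 1).
\]
We name the two factors $r = x + 1$ and $s = \sum_{i=0}^{n-1} (-1)^i x^{i}$. The sign pattern is right because $n-1$ is even, so the leading term $x^{n-1}$ has sign $+$.

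We verify the factorization by distributing $r$ over $s$. The product $x\cdot s$ gives $\sum_{i=0}^{n-1}(-1)^i x^{i+1}$ and $1\cdot s$ gives $\sum_{i=0}^{n-1}(-1)^i x^i$. In the sum, every middle power $x^j$ with $1 \le j \le n-1$ appears once with sign $(-1)^{j-1}$ and once with sign $(-1)^j$, so these cancel. Only the top term $(-1)^{n-1}x^n = x^n$ and the constant term $1$ survive. This telescoping is routine and could also be done by induction on odd $n$.

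The main obstacle is showing $s > 1$; this is the only place where the hypotheses that $n$ is odd and $n > 1$ are used. Write $n - 1 = 2m$ with $m \ge 1$. Pair consecutive terms of $s$ from the top:
\[
s = \sum_{k=1}^{m} \bigl(x^{2k} - x^{2k-1}\bigr) + 1 = \sum_{k=1}^{m} x^{2k-1}(x-1) + 1.
\]
Each summand is positive because $x > 1$ gives $x - 1 \ge 1$ and $x^{2k-1} \ge 1$. The sum is nonempty because $m \ge 1$, which is where $n > 1$ enters. Hence $s \ge 1 + x(x-1) > 1$. In a formal setting, one might instead prove $s > 1$ by induction on $m$, using $s_{m+1} = x^2 s_m - x + 1$.

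It remains to show $s < N$. From $x > 1$ we get $r = x + 1 > 1$. Since $s > 0$, multiplying $r > 1$ by $s$ gives $N = r\cdot s > s$. Therefore $s$ is a factor of $N$ with $1 < s < N$, so $N = x^n+1$ is composite, and in particular not prime.
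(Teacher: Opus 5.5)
Your proof is correct and follows essentially the same route as the paper: write $N = x^n+1$ as $r\cdot s$ with $r = x+1$ and $s$ the alternating sum, show $s > 1$, and multiply $r > 1$ by $s$ to get $s < N$. Your pairing $s = \sum_{k=1}^{m} x^{2k-1}(x-1) + 1$ justifies $s > 1$ more rigorously than the paper's written proof, whose step ``each positive term $x^k \ge 1$, so $s \ge 1$'' does not account for the negative terms; your grouping is the one described in the paper's narrative scenario.
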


\begin{proof}
Let $x$ be an arbitrary integer and $n$ be an odd integer with $n > 1$. Assume that $x > 1$. Let us denote $x^n + 1$ by $N$. We wish to show that $N$ is not prime. We will do that by showing that $N$ has a factor that is neither 1 nor equal to $N$.

First observe that for odd $n > 1$, $x^n + 1 = (x+1)(x^{n-1} - x^{n-2} + x^{n-3} - \cdots + 1)$. Let us also denote $x + 1$ by $r$ and the alternating sum $x^{n-1} - x^{n-2} + \cdots + 1$ by $s$. Note that both $r$ and $s$ are factors of $N$, since $N = r \cdot s = s \cdot r$.

Now, because $x > 1$, we have $r = x + 1 > 2$. Moreover, we show that $s > 1$ by the following reasoning: $s$ has $n$ terms alternating in sign, and since $n$ is odd, the last term is $+1$, and each positive term $x^k \geq 1$ for $x > 1$, so $s \geq 1$, and since $x > 1$ we get $s > 1$.

Finally, multiply both sides of $r > 1$ with $s$. We get $r \cdot s > s$. However $r \cdot s = N$. Therefore we have derived $N > s$. Since $N$ has a factor $s$ such that $1 < s < N$, $N$ is composite.
\end{proof}

\subsubsection{Task 2: Arithmetic Progression Squares}

\begin{theorem}
For all integers $n$, $n(n+1)(n+2)(n+3) + 1$ is a perfect square.
\end{theorem}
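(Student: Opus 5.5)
We pair the outer factors and the inner factors, following the same strategy that Answer A described earlier in the paper. Let $n$ be an arbitrary integer. We set $p = n(n+3) = n^2 + 3n$ and $q = (n+1)(n+2) = n^2 + 3n + 2$. The first key observation is that $q = p + 2$, which follows by direct expansion.

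Regrouping the four factors gives
\[
n(n+1)(n+2)(n+3) + 1 = p \cdot q + 1 = p(p+2) + 1 = p^2 + 2p + 1 = (p+1)^2 .
\]
Substituting back $p = n^2 + 3n$ yields the explicit witness $(n^2 + 3n + 1)^2$. Since $n^2 + 3n + 1$ is an integer for every integer $n$, the expression is the square of an integer, and hence a perfect square. No hypothesis on $n$ is needed. Negative $n$, and values such as $n \in \{-3,-2,-1,0\}$ where the product vanishes and the result is $1 = 1^2$, are covered by the same identity.

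The main obstacle is not any single computation, since everything is polynomial algebra. It is the choice of pairing. Pairing the outer factors with each other and the inner factors with each other makes the two quadratics share the term $n^2 + 3n$, so their product collapses into a square in one variable. Any other pairing, for example $n(n+1)$ and $(n+2)(n+3)$, produces quadratics with different linear terms and no clean completion of the square. In a formal version, for instance with a Lean \texttt{have} for $q = p+2$, a second \texttt{have} for the factored form, and a final \texttt{ring} check that $n(n+1)(n+2)(n+3)+1 = (n^2+3n+1)^2$, this identity is the only nontrivial step.

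The same argument suggests the generalization $n(n+r)(n+2r)(n+3r) + r^4 = (n^2 + 3rn + r^2)^2$. Here the outer product is $p = n^2 + 3rn$, the inner product is $p + 2r^2$, and the constant needed to complete the square becomes $r^4$.
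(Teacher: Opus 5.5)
Your proof is correct and is exactly the paper's argument. You set $p = n(n+3) = n^2+3n$, note $q = (n+1)(n+2) = p+2$, rewrite the expression as $p(p+2)+1 = (p+1)^2$, and substitute back to get $(n^2+3n+1)^2$; your generalization $n(n+r)(n+2r)(n+3r) + r^4 = (n^2+3rn+r^2)^2$ is also correct and matches the one reported in the paper's study results.
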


\begin{proof}
Let $n$ be an arbitrary integer. Pair the outer terms: let $p = n \cdot (n+3)$. Expanding, $p = n^2 + 3n$. Pair the inner terms: let $q = (n+1) \cdot (n+2)$. Expanding, $q = n^2 + 3n + 2 = p + 2$.

Rewrite the product: $n(n+1)(n+2)(n+3) = p \cdot q = p(p+2)$. Expand $p(p+2) = p^2 + 2p$. Add 1 to both sides: $p^2 + 2p + 1$. Recognize the perfect square: $p^2 + 2p + 1 = (p+1)^2$.

Substitute back $p = n^2 + 3n$ to get $p + 1 = n^2 + 3n + 1$. Therefore $n(n+1)(n+2)(n+3) + 1 = (n^2 + 3n + 1)^2$. Since $n^2 + 3n + 1$ is an integer, $n(n+1)(n+2)(n+3) + 1$ is a perfect square.
\end{proof}

\begin{theorem}
For all integers $n$, $n(n+4)(n+8)(n+12) + 256$ is a perfect square.
\end{theorem}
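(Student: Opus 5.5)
The plan is to copy the proof of the preceding theorem, with spacing $4$ in place of $1$. Let $n$ be an arbitrary integer and pair the outer and inner factors. Set $p = n(n+12) = n^2 + 12n$ for the outer pair and $q = (n+4)(n+8) = n^2 + 12n + 32$ for the inner pair. Then the inner product differs from the outer one by a constant, namely $q = p + 32$. This matches the generalization $n(n+r)(n+2r)(n+3r) + r^4$ that participants found, here with $r = 4$, $2r^2 = 32$ and $r^4 = 256$.

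Next we rewrite the product as
\[
n(n+4)(n+8)(n+12) = p\,q = p(p+32) = p^2 + 32p,
\]
and then add $256$:
\[
p^2 + 32p + 256 = (p+16)^2 .
\]
The key step is recognizing $256 = 16^2$, where $16 = 32/2$. This is exactly the constant that completes the square.

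Finally we substitute back $p = n^2 + 12n$, which gives
\[
n(n+4)(n+8)(n+12) + 256 = (n^2 + 12n + 16)^2 .
\]
Since $n^2 + 12n + 16$ is an integer, the expression is a perfect square.

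There is no real obstacle; the only care needed is the arithmetic check that $q - p = 32$. As a sanity check, at $n = 0$ the expression is $0 + 256 = 16^2$, and $n^2 + 12n + 16$ evaluates to $16$ there, as required. In a Lean formalization, the final identity would be discharged by \texttt{ring} after introducing $p$ and $q$ through \texttt{have} statements.
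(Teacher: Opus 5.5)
Your proof is correct and follows the paper's argument essentially step for step. Like the paper, you pair the outer and inner factors as $p = n^2 + 12n$ and $q = p + 32$, complete the square $p^2 + 32p + 256 = (p+16)^2$, and substitute back to get $(n^2 + 12n + 16)^2$; your remarks on the general $r^4$ pattern and the $n = 0$ check are accurate but not needed.
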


\begin{proof}
Let $n$ be an arbitrary integer. Pair the outer terms: let $p = n \cdot (n+12)$. Expanding, $p = n^2 + 12n$. Pair the inner terms: let $q = (n+4) \cdot (n+8)$. Expanding, $q = n^2 + 12n + 32 = p + 32$.

Rewrite the product: $n(n+4)(n+8)(n+12) = p \cdot q = p(p+32)$. Expand $p(p+32) = p^2 + 32p$. Add 256 to both sides: $p^2 + 32p + 256$. Recognize the perfect square: $p^2 + 32p + 256 = (p+16)^2$.

Substitute back $p = n^2 + 12n$ to get $p + 16 = n^2 + 12n + 16$. Therefore $n(n+4)(n+8)(n+12) + 256 = (n^2 + 12n + 16)^2$. Since $n^2 + 12n + 16$ is an integer, $n(n+4)(n+8)(n+12) + 256$ is a perfect square.
\end{proof}

\subsection{Study Questions}
\label{questions}

\subsubsection{Task 1 Questions (Composite Numbers)}

\begin{enumerate}
    \item Describe the proof of Theorem 1 in your own words.
    \item Why does the proof of Theorem 1 show $x - 1 > 1$?
    \item Why does the theorem not hold for $x = 2$?
    \item Can the assumption in Theorem 1 ($x > 2$) be weakened while keeping the conclusion true?
    \item What are the similarities and differences between Theorem 1 and 4 in their proofs?
\end{enumerate}

\subsubsection{Task 2 Questions (Arithmetic Progression Squares)}

\begin{enumerate}
    \item Describe the proof of Theorem 1 in your own words.
    \item Why does the proof pair outer terms and inner terms separately?
    \item Why does this theorem hold for all integers?
    \item Can Theorem 1 be generalized by modifying the expression?
    \item What are the similarities and differences between Theorem 1 and 4 in their proofs?
\end{enumerate}

\end{document}